\documentclass[11pt,a4paper]{article} 
\usepackage{natbib}
\usepackage[hyperindex, hidelinks]{hyperref}
\usepackage{stmaryrd}
\usepackage{graphicx}
\usepackage{wrapfig}
\usepackage{subcaption}
\usepackage{amssymb}
\usepackage{bbm}
\usepackage{verbatim}
\usepackage{enumerate}
\usepackage[hyperindex]{hyperref}
\usepackage{mathtools} 
\usepackage{amsmath, amsfonts, amsthm, amscd}
\usepackage{fancybox}
\usepackage[T1]{fontenc}
\newlength{\querylen}
\usepackage{tikz}
\usepackage{tikz-qtree}
\usetikzlibrary{bayesnet}
\usepackage{mathtools}

\newcommand{\R}{\mathbb{R}}

\newcommand{\E}{\mathbb{E}}

\theoremstyle{plain}

\newtheorem{theorem}{Theorem}

\newtheorem{assumption}{Assumption}

\newtheorem{lemma}{Lemma}

\newtheorem{?}{Question}

\newtheorem{algorithm}{Algorithm}

\usepackage{titlesec}
\titleformat{\subsubsection}[runin]
{\normalfont\itshape}{\thesubsubsection}{1em}{}

\def \E {\mathbb{E}}

\def\simiid{\overset{\text{i.i.d.}}{\sim}}
\def \R {\mathbb{R}}

\title{Multiplicative Errors-in-Variables Models for Hyperspectral Unmixing}
\author{Vivek Singh and Peter Hoff \\\\
Department of Statistical Science, Duke University}
\date{\today}

\begin{document}
\maketitle

\begin{abstract}
Estimating endmember spectra and their corresponding abundances from hyperspectral images is a fundamental inverse problem in remote sensing. The standard linear mixing model relies on additive homoscedastic errors, which misspecifies the mean-variance relationship inherent to optical measurements for which the noise intensity might be proportional to the signal magnitude. This leads to confidence intervals with poor coverage. To address these limitations, we propose a hierarchical Bayesian multiplicative Errors-in-Variables (EIV) model. Motivated by the physics of light propagation, our formulation captures signal-dependent noise through a multivariate log-normal specification that naturally accommodates dependence across spectral bands. The EIV framework treats endmember signatures as random and allows for class-specific variance scaling, yielding a flexible and physically grounded data-generating process. We establish that the MLE of the abundance vector under the multiplicative model has a constant asymptotic multivariate coefficient of variation, implying that well-calibrated confidence intervals should scale proportionally with signal magnitude. The additive model, by contrast, produces intervals of constant width regardless of signal magnitude, leading to overcoverage at low abundance and undercoverage at high abundance when the true noise is multiplicative. Practical implementation and inference for the full hierarchical model is obtained from posterior distributions over the model parameters. Experiments on several real and simulated datasets demonstrate that confidence intervals derived from the proposed model achieve improved coverage across abundance levels compared to those from a standard additive model, with widths that scale proportionally with signal magnitude, while achieving comparable or superior signal reconstruction error.

\smallskip
\noindent Keywords: Bayesian inference, Errors-in-Variables, hyperspectral unmixing, Markov chain Monte Carlo, multiplicative noise, uncertainty quantification.
\end{abstract}

\section{Introduction}
\label{sec:intro}
Hyperspectral imaging serves as a robust analytical framework for the spectroscopic characterization of materials within an observed surface area of interest, or scene 
\cite{BioucasDias2013HyperspectralRS}. However, due to inherent sensor resolution constraints and the spatial heterogeneity of the Earth's surface, individual pixels often represent a mixture of multiple distinct materials \cite{SU}. This phenomenon necessitates spectral unmixing, a source separation task aimed at identifying the constituent substances (endmembers) and quantifying their respective spatial proportions (abundances) within the mixed observations. 
The linear mixing model is the predominant paradigm in spectral unmixing literature \cite{SU, LMM}. It postulates that the observation vector $y \in \R_+^n$, where $n$ is the number of bands, is a linear combination of the scene's $p$ endmembers perturbed by additive noise:
\begin{equation}
    y = Mb + \epsilon,
\end{equation}
where $b \in \mathbb{R}^p_+$ represents the abundance vector subject to non-negativity constraints, $M \in\R_+^{n\times p}$ represents the matrix of endmember spectra, and $\epsilon \in\R^n$ denotes the noise term. Standard formulations assume $\epsilon$ follows a zero mean Gaussian distribution. Non-negative matrix factorization (NMF) approaches that estimate the abundance vector $b$ also implicitly rely on this additive assumption by minimizing the residual sum of squared error, which is equivalent to maximum likelihood estimation under a Gaussian noise model \cite{NMF_LS}.

However, beyond point estimation, a critical question is how reliable these estimates are. For uncertainty quantification, what matters is that the noise model correctly captures the variance of the observations, since confidence intervals depend on the estimated variance. Under the additive formulation, the variance of $y$ is assumed to be constant across different levels of the signal, which is at odds with the physics of optical measurements where noise intensity is multiplicative, scaling with signal magnitude due to effects such as atmospheric path radiance and topographic shading \cite{Richter01012002}. Indeed, signal-dependent multiplicative noise in hyperspectral imagery has been acknowledged in prior work. \cite{IS_div} minimizes the Itakura-Saito divergence, equivalent to maximum likelihood estimation under a multiplicative Gamma noise model, and \cite{Acito2011} explicitly models signal-dependent noise parameters. However, both approaches assume independence across spectral bands and focus exclusively on point estimation, leaving uncertainty quantification unaddressed.

The misspecified mean-variance relationship in the additive model leads to a fundamental miscalibration: the additive model produces confidence intervals that are too wide at low abundance values and too narrow at high ones. A correctly specified model must instead yield confidence intervals whose width scales proportionally with the signal, a property we establish theoretically and demonstrate empirically for our proposed multiplicative formulation. Figure~\ref{fig:add_v_mult} illustrates the two key features of the data that motivate our model on the Asphalt endmember from the Urban dataset \cite{zhu2014spectral}. The left panel shows the marginal distribution of reflectance at a single band across reference pixels, which is clearly right-skewed, a signature of signal-dependent multiplicative noise rather than symmetric additive noise. The right panel shows the estimated log-scale covariance matrix across all spectral bands, revealing strong positive dependence across bands that cannot be captured by a diagonal covariance structure.

\begin{figure}[!htbp]
    \centering
    \includegraphics[width=\columnwidth]{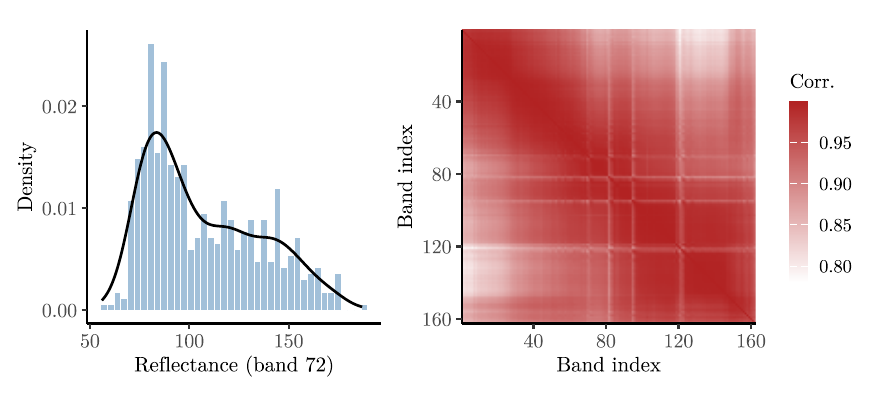}
    \caption{Marginal distribution and inter-band dependence structure of the Asphalt endmember class from the Urban dataset \cite{zhu2014spectral}. Left: histogram and kernel density estimate of the reflectance at a single spectral band across reference pixels, illustrating the right-skewed marginal distribution characteristic of multiplicative noise. Right: estimated log-scale correlation matrix across all spectral bands, showing strong positive dependence across bands that motivates the use of a full covariance matrix $\Sigma$.}
    \label{fig:add_v_mult}
\end{figure}

The standard practive with LMM is to take the endmember matrix $M$ as given. In many applications, however, $M$ is never directly observed. Even pixels known to be composed predominantly of a single material class exhibit significant within-class spectral variability due to intrinsic material heterogeneity, localized degradation, and micro-topography \cite{borsoi2021spectral}. This naturally motivates an Errors-in-Variables (EIV) formulation \cite{Fuller1987}, where the reference pixels are treated as samples from a class-specific distribution rather than exact observations of the true signatures. It is well established that plugging in aggregated estimates of $M$ in place of the true signatures introduces bias in the resulting abundance estimates \cite{Fuller1987}. 

Furthermore, this within-class variability is not uniform across material classes. Complex organic surfaces such as vegetation exhibit higher spectral variance due to differences in chlorophyll content and leaf structure, whereas man-made surfaces like concrete or asphalt tend to be relatively homogeneous \cite{borsoi2021spectral}. A principled model must therefore allow for class-specific variance profiles rather than assuming a single global level of spectral variability across all endmembers.

In this article, we propose a hierarchical approach to hyperspectral unmixing that addresses the issues raised above. We model the data generation process using a multiplicative error model, which is robust in scenarios where signal-dependent noise dominates. Our contributions are as follows:
\begin{itemize}
    \item Uncertainty quantification: We model the observations using a multivariate log-normal distribution, which captures the signal-dependent mean-variance relationship inherent to optical measurements. As established in Section~\ref{sec:asymptotics}, this yields well-calibrated confidence intervals whose width scales proportionally with the signal magnitude, in contrast to the constant-width intervals produced by the additive model.
    \item Within and across-class spectral variability: We adopt a hierarchical EIV formulation where the reference pixels are treated as samples from class-specific distributions, accommodating endmember heterogeneity within each material class. Furthermore, the degree of spectral variability is allowed to differ across classes, reflecting the physical differences between material classes. A shared full covariance matrix $\Sigma$ explicitly captures inter-band dependencies across spectral channels, common to both the mixed pixel and the reference pixels.
\end{itemize}

Both contributions are essential for well-calibrated uncertainty quantification. The scaling of interval widths with signal magnitude is established theoretically in Section~\ref{sec:asymptotics}, while the EIV formulation accounts for the additional uncertainty introduced by the unobserved endmember signatures.

The rest of the article is organized as follows. Section~\ref{sec:asymptotics} considers a simplified setting with known endmember matrix and scalar covariance, and establishes that the MLE of the abundance vector under the multiplicative model has a constant asymptotic multivariate coefficient of variation, suggesting that the interval widths should scale with signal magnitude, and demonstrates via simulation that intervals constructed assuming the additive model are miscalibrated when the true noise is multiplicative. Section~\ref{sec:model} introduces the full hierarchical EIV model with class-specific variance scaling and a shared spectral covariance structure, and describes an MCMC algorithm for posterior inference. Section~\ref{sec:experiments} evaluates both models on real datasets, comparing coverage rates, interval widths, and signal reconstruction error across abundance levels. Section~\ref{sec:discussion} discusses limitations of our multiplicative model and directions for future work. The replication code and an \texttt{R} package implementing the proposed model are available on the first author's webpage. 


\section{Adaptive Confidence Intervals via a Multiplicative Model}
\label{sec:asymptotics}

To understand the effect of the noise model on uncertainty quantification, we consider a simplified unmixing setting where the endmember matrix $M$ is assumed to be known and the covariance is proportional to the identity matrix. Under these assumptions, we compare the asymptotic behavior of confidence intervals for the abundance vector $b$ under the additive and multiplicative formulations.

The additive model assumes the observed pixel is a linear mixture of endmembers perturbed by homoscedastic noise:
\begin{align}
\label{model:add}
    y = Mb + \epsilon, \qquad \epsilon \sim \mathcal{N}_n(0, \sigma^2 I),
\end{align}
where $y \in \mathbb{R}_+^n$ is the observed spectral signature, $M \in \mathbb{R}_+^{n \times p}$ is the endmember matrix, $b \in \mathbb{R}_+^p$ is the unknown abundance vector, and $\sigma^2$ controls the noise level. The key feature of this model is that the noise is independent of the signal: the variance of $y$ is $\sigma^2 I$ regardless of the value of $Mb$.

The multiplicative model assumes the noise scales with the signal, so that the observed spectrum is an element-wise product of the signal and a noise term:
\begin{align}
\label{model:mult}
    y = (Mb) \odot \varepsilon, \qquad \log(\varepsilon) \sim \mathcal{N}_n(0, \sigma^2 I),
\end{align}
where $y \in \mathbb{R}_+^n$ and all other quantities are as above. Here the standard deviation of $y_i$ is proportional to $m_i^T b$, reflecting the mean-variance relationship inherent to optical measurements.

\subsection{Asymptotic Properties}
\label{sec:asymp_properties}

We characterize the MLE under the correctly specified multiplicative model \eqref{model:mult}. Write $m_i^T$ for the $i$-th row of $M$, let $b_0$ denote the true abundance vector, and set $w_i(b) := m_i / (m_i^T b)$ and $w_i := w_i(b_0)$.

Taking logarithms in \eqref{model:mult} turns the model into the nonlinear regression
\begin{equation}\label{eq:logmodel}
    z_i := \log y_i = \log(m_i^T b) + \eta_i, \qquad \eta_i \overset{\text{iid}}{\sim} \mathcal{N}(0, \sigma^2),
\end{equation}
so that the MLE coincides with the nonlinear least squares estimator
\begin{equation}\label{eq:nls}
    \hat b_{\mathrm{MLE}} = \arg\min_{b \in \Theta} \sum_{i=1}^n \bigl(z_i - \log(m_i^T b)\bigr)^2,
\end{equation}
with regression function $f_i(b) = \log(m_i^T b)$, gradient $f_i'(b) = w_i(b)$, and Hessian $f_i''(b) = -w_i(b) w_i(b)^T$. Its large-sample behaviour therefore follows from the classical asymptotic theory of nonlinear least squares \cite{Wu1981}, once the following regularity conditions are met.

\begin{assumption}\label{assump:reg}
\leavevmode
\begin{enumerate}
    \item[(A1)] $b_0 \in \operatorname{int}(\Theta)$ for a compact convex parameter set $\Theta \subset \mathbb{R}^p$.
    \item[(A2)] $\sup_i \|m_i\| < \infty$ and $\inf_i \inf_{b \in \Theta} m_i^T b > 0$. In particular $\sup_{i,\, b \in \Theta} \|w_i(b)\| \le W < \infty$.
    \item[(A3)] $\displaystyle \frac{1}{n} \sum_{i=1}^n \frac{m_i m_i^T}{(m_i^T b_0)^2} \longrightarrow \bar{I}(b_0)$, positive definite.
\end{enumerate}
\end{assumption}
  
\begin{lemma}[Consistency]\label{lemma:consistency}
    Under model \eqref{model:mult} and Assumption \ref{assump:reg}, $\hat{b}_{\mathrm{MLE}} \to b_0$ almost surely as $n \to \infty$.
\end{lemma}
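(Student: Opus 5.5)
The plan is to verify the hypotheses of the strong consistency theorem for nonlinear least squares of \cite{Wu1981} (in the Jennrich--Wu form). The alternative is to run the standard argument directly for the criterion
\[
Q_n(b) = \frac1n\sum_{i=1}^n \bigl(z_i - f_i(b)\bigr)^2, \qquad f_i(b)=\log(m_i^T b).
\]
Substituting $z_i = f_i(b_0)+\eta_i$ gives the decomposition
\[
Q_n(b)-Q_n(b_0) = D_n(b) + 2 L_n(b), \qquad D_n(b)=\frac1n\sum_{i=1}^n \bigl(f_i(b_0)-f_i(b)\bigr)^2, \qquad L_n(b)=\frac1n\sum_{i=1}^n \eta_i\bigl(f_i(b_0)-f_i(b)\bigr).
\]
Consistency then follows once we show two things: (i) $\sup_{b\in\Theta}|L_n(b)|\to 0$ almost surely; and (ii) for every $\delta>0$, $\liminf_n \inf_{\|b-b_0\|\ge\delta,\, b\in\Theta} D_n(b)>0$. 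Given these, on the probability-one event where (i) holds, any minimizer $\hat b$ satisfies $D_n(\hat b)\le 2\sup|L_n| \to 0$. By (ii), this forces $\hat b\to b_0$. Existence of a measurable minimizer follows from compactness of $\Theta$ (A1) and continuity of $f_i$, which (A2) guarantees on $\Theta$.

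\textbf{Step (i): uniform convergence of $L_n$.} By (A2), $|f_i(b)-f_i(b')|\le W\|b-b'\|$ uniformly in $i$, by the mean value theorem with gradient $w_i$ on the convex set $\Theta$; also $|f_i(b_0)-f_i(b)|\le W\,\mathrm{diam}(\Theta)$. For fixed $b$, $L_n(b)$ is an average of independent mean-zero Gaussians with uniformly bounded variances, so $L_n(b)\to0$ a.s. by Kolmogorov's SLLN. Take a finite $\epsilon$-net of $\Theta$. The oscillation between net points is bounded by $W\epsilon\cdot\frac1n\sum|\eta_i|$, which converges a.s. to $W\epsilon\sigma\sqrt{2/\pi}$. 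Letting $\epsilon\downarrow0$ along a countable sequence gives the uniform statement.

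\textbf{Step (ii): identifiability.} This is the main obstacle, since (A3) only controls the curvature locally at $b_0$. My first attempt: for $b\in\Theta$ and $c_i=m_i^T b$, $c_{0i}=m_i^T b_0$, both lie in a compact interval $[\underline c,\bar c]\subset(0,\infty)$ by (A2). On that interval $\log$ has derivative at least $1/\bar c$, so
\[
|\log c_{0i}-\log c_i|\ge |m_i^T(b_0-b)|/\bar c.
\]
Hence
\[
D_n(b)\ge \bar c^{-2}(b-b_0)^T\Bigl(\frac1n\sum m_im_i^T\Bigr)(b-b_0).
\]
Using $m_i m_i^T \succeq \underline c^{\,2}\, w_iw_i^T$ (since $m_i^T b_0\ge\underline c$) and (A3), the matrix $\frac1n\sum m_im_i^T$ eventually dominates $\tfrac12\underline c^{\,2}\lambda_{\min}(\bar I(b_0))\,I$. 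This yields $D_n(b)\ge \kappa\|b-b_0\|^2$ for all large $n$, uniformly over $b\in\Theta$, with $\kappa>0$. That gives (ii), and in fact a global quadratic lower bound, so no separate local/global argument is needed. Combining (i) and (ii) as above completes the proof. The main thing to check carefully is that the constants $\underline c,\bar c$ come out uniform in $i$ and over $\Theta$, which is exactly what (A2) supplies.
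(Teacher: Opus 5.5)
Your proposal is correct. Step (ii) is the paper's own argument from Appendix A. It uses the same mean-value bound $(f_i(b)-f_i(b_0))^2 \ge \bar c^{-2}(m_i^T(b-b_0))^2$. It then uses the same domination $\frac1n\sum_i m_i m_i^T \succeq \underline c^{\,2}\,\frac1n\sum_i w_i w_i^T \to \underline c^{\,2}\bar I(b_0)$, which gives the same uniform quadratic lower bound $D_n(b)\ge\kappa\|b-b_0\|^2$.

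Where you differ is in how you use that bound. The paper treats it as verifying the divergence and growth-rate parts of Wu's Assumption A. It adds the Lipschitz condition from $\|w_i(b)\|\le W$ and then cites \cite[Theorem 3]{Wu1981}. You instead prove the conclusion directly, in three steps:
\begin{enumerate}
\item the decomposition $Q_n(b)-Q_n(b_0)=D_n(b)+2L_n(b)$;
\item a Kolmogorov SLLN plus an $\epsilon$-net argument giving $\sup_{b\in\Theta}|L_n(b)|\to 0$ almost surely;
\item the basic inequality $D_n(\hat b)\le 2\sup_{b\in\Theta}|L_n(b)|$.
\end{enumerate}
All of these steps check out. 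The net step is valid because $\Theta$ is convex and the gradient bound $W$ is uniform in $i$, and $\frac1n\sum_i|\eta_i|\to\sigma\sqrt{2/\pi}$ almost surely.

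The paper's route is shorter and rests on a theorem that covers much more general situations, such as other normalizing sequences and weaker error assumptions. Your route is self-contained and does not depend on the exact form of Wu's hypotheses, which the paper only paraphrases. It uses nothing beyond the boundedness and uniform Lipschitz property that (A1)--(A2) provide. As a by-product, it gives the explicit inequality $\|\hat b-b_0\|^2\le 2\kappa^{-1}\sup_{b\in\Theta}|L_n(b)|$ for all large $n$. That inequality makes clear that the global quadratic lower bound is what does the work.
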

 
\begin{theorem}[Asymptotic Normality]\label{thm1}
    Under model \eqref{model:mult} and Assumption \ref{assump:reg},
    \begin{equation*}
        \sqrt{n}(\hat{b}_{\mathrm{MLE}} - b_0) \xrightarrow{d} \mathcal{N}(0, V(b_0)), \qquad V(b_0) = \sigma^2 \bar{I}(b_0)^{-1}.
    \end{equation*}
\end{theorem}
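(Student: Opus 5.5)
The plan is the standard Taylor expansion of the score about $b_0$, using Lemma~\ref{lemma:consistency} to localize. Write $S_n(b)=\sum_{i=1}^n (z_i-\log(m_i^Tb))^2$. By (A1) and consistency, $\hat b_{\mathrm{MLE}}\in\operatorname{int}(\Theta)$ with probability tending to one, so on that event the first-order condition holds:
\[
\nabla S_n(\hat b_{\mathrm{MLE}})=-2\sum_{i=1}^n \bigl(z_i-\log(m_i^T\hat b_{\mathrm{MLE}})\bigr)w_i(\hat b_{\mathrm{MLE}})=0 .
\]
Convexity of $\Theta$ lets us apply the mean value theorem coordinatewise along the segment from $b_0$ to $\hat b_{\mathrm{MLE}}$. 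This gives
\[
0=\nabla S_n(b_0)+H_n\,(\hat b_{\mathrm{MLE}}-b_0),
\]
where the rows of $H_n$ are rows of $\nabla^2 S_n$ evaluated at intermediate points $\tilde b^{(k)}$ on that segment.

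\textbf{Score term.} Since $z_i-\log(m_i^Tb_0)=\eta_i$,
\[
-\tfrac{1}{2\sqrt n}\nabla S_n(b_0)=n^{-1/2}\sum_i \eta_i w_i .
\]
This is a sum of independent mean-zero vectors with covariance $\sigma^2 n^{-1}\sum_i w_iw_i^T\to\sigma^2\bar I(b_0)$ by (A3). Because $\|w_i\|\le W$ by (A2) and the $\eta_i$ are i.i.d.\ Gaussian, the Lindeberg condition holds. Applying Cram\'er--Wold with Lindeberg--Feller gives
\[
n^{-1/2}\sum_i\eta_iw_i\xrightarrow{d}\mathcal N\bigl(0,\sigma^2\bar I(b_0)\bigr).
\]
In fact, since the $\eta_i$ are Gaussian, this vector is exactly normal for each $n$ with that finite-$n$ covariance, so only convergence of the covariance is needed.

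\textbf{Hessian term.} Using $f_i''=-w_iw_i^T$,
\[
\tfrac{1}{2n}\nabla^2S_n(b)=\tfrac1n\sum_i w_i(b)w_i(b)^T+\tfrac1n\sum_i\bigl(z_i-\log(m_i^Tb)\bigr)w_i(b)w_i(b)^T .
\]
I need this to converge to $\bar I(b_0)$ uniformly over a shrinking neighbourhood of $b_0$. I expect this to be the main technical step. The plan is as follows.

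First, on $\Theta$ the map $b\mapsto w_i(b)w_i(b)^T$ is Lipschitz with a constant $L$ uniform in $i$. This follows from (A2): $\|\partial w_i/\partial b\|=\|w_iw_i^T\|\le W^2$, so $w_iw_i^T$ is Lipschitz with constant $2W^3$. Hence the first average at $\tilde b^{(k)}$ differs from its value at $b_0$ by at most $L\|\hat b_{\mathrm{MLE}}-b_0\|\to0$ a.s., and its value at $b_0$ tends to $\bar I(b_0)$ by (A3).

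Second, split the residual as $\eta_i+(\log(m_i^Tb_0)-\log(m_i^Tb))$:
\begin{itemize}
\item The deterministic piece is bounded by $W\|b-b_0\|$, so its contribution is $O(W^3\|\hat b_{\mathrm{MLE}}-b_0\|)\to0$.
\item The stochastic piece $\frac1n\sum_i\eta_i w_i(b)w_i(b)^T$ is handled in two steps. At $b=b_0$ it tends to $0$ a.s.\ by the strong law for weighted independent variables with bounded weights (Kolmogorov's criterion, since $\sum_i \sigma^2W^4/i^2<\infty$). Moving from $b_0$ to $\tilde b^{(k)}$ costs at most $L\|\tilde b^{(k)}-b_0\|\cdot\frac1n\sum_i|\eta_i|$, which tends to $0$ because $\frac1n\sum|\eta_i|\to\E|\eta_1|$ a.s.
\end{itemize}
Together these give $H_n/(2n)\to\bar I(b_0)$ almost surely. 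Since $\bar I(b_0)$ is positive definite, $H_n/(2n)$ is invertible with probability tending to one.

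\textbf{Conclusion.} Solving the expansion gives
\[
\sqrt n(\hat b_{\mathrm{MLE}}-b_0)=\bigl(H_n/(2n)\bigr)^{-1}\,n^{-1/2}\sum_i\eta_iw_i .
\]
Slutsky's lemma then yields the limit $\mathcal N\bigl(0,\bar I^{-1}\sigma^2\bar I\,\bar I^{-1}\bigr)=\mathcal N\bigl(0,\sigma^2\bar I(b_0)^{-1}\bigr)$, which is the claimed $V(b_0)$. Alternatively, one can verify that (A1)--(A3) imply the hypotheses of Wu (1981, Theorem 5) and cite it directly. The direct argument above is essentially that verification.
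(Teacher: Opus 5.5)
Your argument is correct, but it takes a more self-contained route than the paper. The paper never writes out the expansion. It checks Assumption B of \cite{Wu1981}: B(i) from (A3), B(ii) from $\|w_i\|\le W$, B(iii) from uniform Lipschitz continuity of $w_i(b)w_i(b)^T$, and B(iv)--(v) from $\|f_i''\|\le W^2$ together with uniform Lipschitz continuity of $f_i''$. It then cites Wu's Theorem~5. You instead unroll that theorem in this special case, and the pieces correspond one for one. Your score central limit theorem is what B(i)--(ii) deliver, and your Lipschitz constant $2W^3$ for $w_iw_i^T$ is B(iii). Your split of the residual into $\eta_i$ plus a deterministic $O(W\|b-b_0\|)$ piece, controlled by Kolmogorov's strong law and $\frac1n\sum_i|\eta_i|\to\E|\eta_1|$, is the work that B(iv)--(v) feed into inside Wu's proof.

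Your version buys transparency in three places. It shows that Gaussianity makes the score $n^{-1/2}\sum_i\eta_i w_i$ exactly normal for each $n$, so a negligibility condition like B(ii) is not really needed. It makes explicit that localization comes from the strong consistency of Lemma~\ref{lemma:consistency} together with $b_0\in\operatorname{int}(\Theta)$. And it handles the stochastic Hessian term $\frac1n\sum_i\eta_i w_i(b)w_i(b)^T$ near $b_0$ directly rather than inside a citation.

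The paper's route buys brevity, and it places consistency and asymptotic normality under a single published framework. The price is a rather compressed verification that the reader must match against Wu's hypotheses.
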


Under Assumption~\ref{assump:reg}, the consistency and asymptotic normality of $\hat b_{\mathrm{MLE}}$ follow from the nonlinear least squares theory of \cite{Wu1981} applied to \eqref{eq:nls}. The verification of the conditions of \cite{Wu1981} is given in Appendix~\ref{app:wu}.

A notable feature of the limiting variance $V(b)$ is that it possesses a constant multivariate coefficient of variation (MCV) \cite{Aerts2015-ou}, formalized below.
 
\begin{lemma}[Constant Multivariate Coefficient of Variation]\label{lemma1}
    Let $\bar I(b) := \lim_{n \to \infty} \frac1n \sum_{i=1}^n \frac{m_i m_i^T}{(m_i^T b)^2}$. At every $b$ with $m_i^T b > 0$ for all $i$ at which $\bar I(b)$ exists and is nonsingular, the asymptotic variance $V(b) = \sigma^2 \bar I(b)^{-1}$ satisfies
    \begin{equation*}
        b^T V(b)^{-1} b = \sigma^{-2}.
    \end{equation*}
\end{lemma}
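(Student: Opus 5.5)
The identity follows from a direct computation, since $V(b)^{-1}=\sigma^{-2}\bar I(b)$ by nonsingularity. It therefore suffices to show $b^T \bar I(b)\, b = 1$.

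For each finite $n$, I will evaluate the quadratic form term by term. Using $m_i^T b>0$, each summand gives
\[
  b^T \frac{m_i m_i^T}{(m_i^T b)^2} b = \frac{(m_i^T b)^2}{(m_i^T b)^2} = 1 .
\]
Averaging over $i$ then yields
\[
  b^T\Bigl(\frac1n\sum_{i=1}^n \frac{m_i m_i^T}{(m_i^T b)^2}\Bigr) b = 1 \qquad \text{for every } n.
\]
Equivalently, $b^T w_i(b)=1$ for every $i$, so $\frac1n\sum_i (b^T w_i(b))^2=1$.

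Next I pass to the limit. The map $A\mapsto b^T A b$ is continuous, and the matrices converge to $\bar I(b)$ by the assumption that the limit exists. Hence $b^T\bar I(b) b=\lim_n 1 = 1$.

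Finally, I multiply by $\sigma^{-2}$ to get $b^T V(b)^{-1} b=\sigma^{-2}$. Here I use $V(b)^{-1}=(\sigma^2\bar I(b)^{-1})^{-1}=\sigma^{-2}\bar I(b)$, which is valid because $\bar I(b)$ is nonsingular.

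There is no real obstacle. The only point needing care is justifying the interchange of the limit with the quadratic form, which is immediate because $b$ is fixed and convergence is entrywise. I will also note that positivity of $m_i^T b$ is what makes each summand well defined.
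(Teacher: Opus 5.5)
Your proposal is correct and follows the paper's proof essentially step for step: you reduce to $b^T \bar I(b)\, b = 1$ via $V(b)^{-1} = \sigma^{-2}\bar I(b)$, observe that each summand's quadratic form equals $1$, and pass to the limit. Your remark that $A \mapsto b^T A b$ is continuous makes explicit the interchange of limit and quadratic form that the paper performs implicitly.
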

 
\begin{proof}
Since $V(b) = \sigma^2 \bar I(b)^{-1}$, we have $V(b)^{-1} = \sigma^{-2} \bar I(b)$, so $b^T V(b)^{-1} b = \sigma^{-2}\, b^T \bar I(b)\, b$. It remains to show $b^T \bar I(b)\, b = 1$. Because $m_i^T b$ is scalar,
\begin{equation*}
    b^T \frac{m_i m_i^T}{(m_i^T b)^2}\, b = \frac{(m_i^T b)^2}{(m_i^T b)^2} = 1 \qquad \text{for every } i,
\end{equation*}
and therefore
\begin{equation*}
    b^T \bar I(b)\, b = \lim_{n \to \infty} \frac1n \sum_{i=1}^n b^T \frac{m_i m_i^T}{(m_i^T b)^2}\, b = \lim_{n \to \infty} \frac1n \sum_{i=1}^n 1 = 1.
\end{equation*}
Hence $b^T V(b)^{-1} b = \sigma^{-2}$.
\end{proof}
 
Lemma~\ref{lemma1} implies that the width of confidence intervals for $b_j$ scales proportionally with the true abundance $b_j$. For example, if $p = 1$, $\sqrt{V(b)} = \sigma\cdot b$, so the asymptotic standard deviation of $\hat{b}_{\text{MLE}}$ is proportional to the true abundance.
 
If instead the multiplicative model is misspecified as additive \eqref{model:add}, the MLE reduces to the ordinary least squares estimator
\begin{equation*}
    \hat{b}_{\text{OLS}} = (M^T M)^{-1} M^T y,
\end{equation*}
whose variance $\sigma^2(M^TM)^{-1}$ does not depend on $b$. Consequently, the resulting confidence intervals have constant width regardless of the true abundance, leading to systematic miscalibration: overestimated interval lengths at low abundance values and insufficient coverage at high abundance values.

\subsection{Numerical Illustration}
\label{sec:numerical_illustration}

We provide numerical validation for Lemma~\ref{lemma1} through simulation. Simulated data are generated according to the simplified multiplicative model \eqref{model:mult} above, with known endmember signatures $M$ and homoscedastic noise $\log(\varepsilon) \sim \mathcal{N}(0, \sigma^2 I)$. Figure~\ref{fig:asymptotics} compares the coverage rates and interval widths of the confidence intervals obtained from the multiplicative and misspecfied additive models for $n = 200$ spectral bands and $p = 4$ endmembers. $\sigma^2$ is set to $0.1$. The additive model \eqref{model:add} produces interval widths that are approximately constant across abundance values, while the multiplicative model yields widths that scale proportionally with the abundance, consistent with the constant MCV property. Consequently, both models achieve correct coverage at low abundance values, though the multiplicative intervals are narrower. At high abundance values, the additive model undercovers while the multiplicative model maintains correct coverage. 
\begin{figure}[!htbp]
    \centering
    \includegraphics[width=.8\columnwidth]{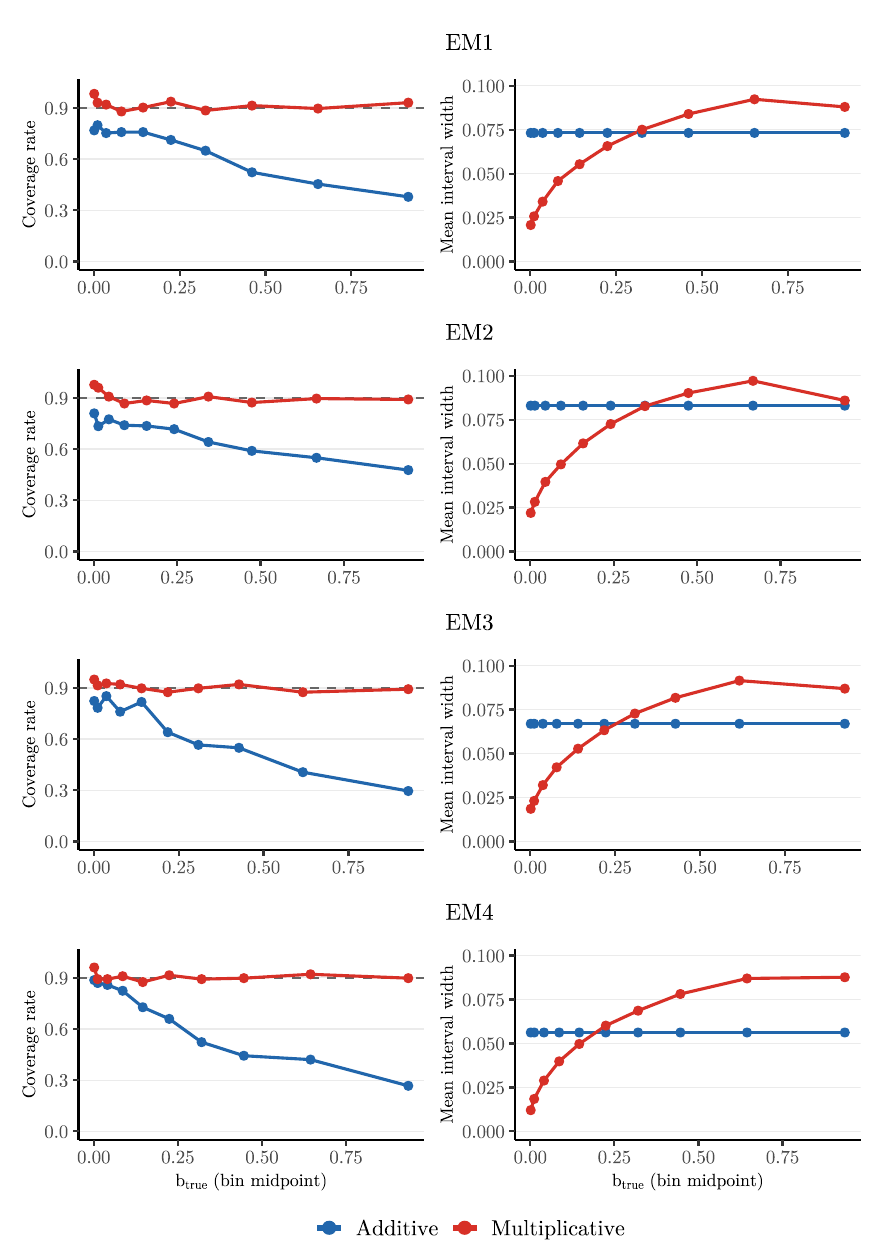}
    \caption{Conditional coverage rates (left) and mean 90\% quantile interval widths (right) as a function of true abundance for each endmember for the multiplicative and misspecified additive fits on simulated data with $n = 200$ spectral bands and $p = 4$ endmembers. Each bin approximately contains $175$ pixels. The dashed line indicates nominal $90\%$ coverage. Blue corresponds to the additive model and red to the multiplicative model.}
    \label{fig:asymptotics}
\end{figure}

\section{Hierarchical Multiplicative Errors-in-Variables Model}
\label{sec:model}

Let $y \in \mathbb{R}_+^n$ denote the observed mixed spectrum across $n$ spectral bands, and let $p$ denote the number of endmember classes in the scene. The goal of unmixing is to recover the abundance vector $b \in \mathbb{R}_+^p$, where $b_j$ quantifies the contribution of the $j$-th material class to the mixed spectrum. Let $\mu_j \in \mathbb{R}_+^n$ denote the median spectrum of the $j$-th endmember class and collect these into the matrix $M = [\mu_1, \dots, \mu_p] \in \mathbb{R}_+^{n \times p}$. We model the mixed spectrum as
\begin{equation*}
    y = Mb \odot \varepsilon,
\end{equation*}
where the multiplicative structure ensures that the noise scales with the signal magnitude. If $M$ were known, this reduces to a non-linear regression problem. In practice, however, $M$ is unobserved and must be estimated from reference spectra. Let $\{x_{ij}\}_{i=1}^{n_j}$ denote the $n_j$ reference spectra associated with the $j$-th endmember class, which are samples from the distribution of that material class. We model them as
\begin{equation*}
    x_{ij} = \mu_j \odot \varepsilon_{ij},
\end{equation*}
where $\varepsilon_{ij}$ captures the within-class spectral variability of the $i$-th reference spectrum. This is precisely the Errors-in-Variables (EIV) setting of \cite{Fuller1987}. It is well established in the EIV literature that naive substitution of an estimate of $M$, such as the sample median of the reference spectra, in place of the true $\mu_j$ leads to biased abundance estimates \cite{Fuller1987}. Our model instead treats $M$ as unknown and infers it jointly with $b$ from the data.

\subsection{Within and Across Endmember Variability}
\label{sec:intraclass}

The multiplicative noise terms $\varepsilon$ and $\varepsilon_{ij}$ are modeled using a multivariate log-normal distribution, which simultaneously captures three sources of variability in the data. First, the noise intensity scales with the signal magnitude, reflecting the mean-variance relationship inherent to optical measurements. Second, spectral bands are dependent and a full covariance matrix $\Sigma \in \mathcal{S}_+^n$ captures these inter-band dependencies. Third, the degree of within-class spectral variability differs across material classes.

To capture these sources jointly, we share the covariance matrix $\Sigma$ across the entire scene while allowing class-specific scaling. Since the reference spectra and the mixed spectrum are drawn from the same scene, they are subject to the same atmospheric and environmental conditions, justifying a common $\Sigma$. Let $\tau_y$ represent the variance scaling for the mixed spectrum and $\tau_j$ the variance scaling for the $j$-th endmember class. The complete model is
\begin{align*}
    y &= Mb \odot \varepsilon, & \log(\varepsilon) \mid \tau_y, \Sigma &\sim \mathcal{N}(0, \tau_y\Sigma), \\
    x_{ij} &= \mu_j \odot \varepsilon_{ij}, & \log(\varepsilon_{ij}) \mid \tau_j, \Sigma &\sim \mathcal{N}(0, \tau_j\Sigma), \quad i = 1, \dots, n_j.
\end{align*}
Figure~\ref{fig:dag} shows a graphical representation of the model. This formulation allows the model to learn a spectral dependence structure ($\Sigma$) shared across the scene, while the scaling parameters $\tau_j$ adaptively quantify the intrinsic variability of each material class. 

\cite{WolbersStahel2005} also employ a log-normal structure in a linear unmixing context but performs point estimation assuming a diagonal error covariance, ignoring inter-band dependencies. \cite{IS_div, Acito2011} acknowledge signal-dependent multiplicative noise but likewise focus on point estimation and do not model inter-band dependence. Our formulation addresses all three sources of variability within a unified hierarchical framework that supports full uncertainty quantification.

\begin{figure}[!htbp]
    \centering
    \begin{tikzpicture}[
        node distance=1.5cm,
        latent/.style={draw, circle, inner sep=2pt, minimum size=1cm},
        obs/.style={draw, circle, fill=gray!20, inner sep=2pt, minimum size=1cm},
        edge/.style={->, >=stealth, thick}
    ]
        \node[latent] (Sigma) at (0, 0) {$\Sigma$};
        \node[latent] (tauy) at (-2, -2) {$\tau_y$};
        \node[latent] (b)    at (-4, -2) {$b$};
        \node[obs]    (y)    at (-2, -4.5) {$y$};
        \node[latent] (tauj) at (2, -2) {$\tau_j$};
        \node[latent] (mu)   at (4, -2) {$\mu_j$};
        \node[obs]    (x)    at (2, -4.5) {$x_{ij}$};
        \draw[edge] (Sigma) -- (y);
        \draw[edge] (Sigma) -- (x);
        \draw[edge] (tauy)  -- (y);
        \draw[edge] (b)     -- (y);
        \draw[edge] (mu)    -- (y);
        \draw[edge] (mu)    -- (x);
        \draw[edge] (tauj)  -- (x);
        \draw[thick] (1.2, -5.2) rectangle (2.8, -3.8) node[below right] {$i = 1:n_j$};
        \draw[thick] (1.0, -5.5) rectangle (5.0, -1.2) node[below right] {$j = 1:p$};
    \end{tikzpicture}
    \caption{Graphical representation of the proposed model. Shaded nodes denote observed quantities. The shared covariance $\Sigma$ governs inter-band dependence across the scene, while $\tau_y$ and $\tau_j$ control the variance scaling for the mixed pixel and the endmember classes respectively.}
    \label{fig:dag}
\end{figure}

\subsection{Bayesian Inference and Priors}
\label{sec:inference}

We employ a Bayesian framework to jointly infer the abundances $b$, the median endmember signatures $M$, and the variance parameters. The complexity of the multiplicative model precludes a closed-form for the posterior distribution, therefore, we utilize the Metropolis-Hastings algorithm to generate approximate samples.

\subsubsection{Prior Specification.}
\label{sec:priors}
We assign the following priors to the parameters:
\begin{itemize}
    \item \textbf{Abundances.} We employ an Exponential prior on the abundance coefficients to enforce non-negativity: $b_j \simiid \text{Exp}(p)$, for $j=1,\dots,p$, which makes $\E[b_k] = 1/p$. We do not impose the sum-to-one constraint as it may not reflect physical reality. For instance, if the scene contains materials not captured by the chosen endmember set, the abundances of the identified classes need not sum to one. 
    
    \item \textbf{Variance Scaling.} We employ Inverse-Gamma priors to the scaling factors with the constraint that their prior expectation is one. Specifically, $\tau_y, \tau_j \sim \text{Inverse-Gamma}(\nu/2, \nu/2)$, where $\nu$ controls the variance of the scaling factor.
    
    \item \textbf{Covariance Structure.} We employ a conjugate Inverse-Wishart$(\Psi, \rho)$ prior distribution to the shared covariance matrix.
    
    \item \textbf{Endmember Class Medians.} We employ a log-normal prior on the median endmember signature $\mu_j$ centered around a prior value, so that $\log(\mu_j) \sim \mathcal{N}(\log(m_j), \sigma^2 I_n)$.
\end{itemize}
In practice, the hyperparameters $\Theta = \{\nu, \Psi, \rho, \sigma^2, m_j\}$ are estimated from the reference pixels $\{x_{ij}\}$ prior to running the MCMC. Specifically, $\log(m_j)$ is set to the sample mean of $\log(x_{ij})$ across $i$, $\Psi$ is initialized to the sample covariance of the log-transformed reference pixels pooled across classes, and $\sigma^2$ is set to a large value to reflect diffuse prior uncertainty on $\mu_j$. The degrees of freedom $\rho = n + 1$ is chosen as the minimum value for which the Inverse-Wishart prior is proper, yielding a diffuse prior on $\Sigma$. The hyperparameter $\nu = 3$ is chosen to give the scaling factors a prior mean of one while retaining heavy tails, allowing the data to pull $\tau_y$ and $\tau_j$ away from one if warranted.

\subsubsection{Posterior Inference: Parameter estimation and confidence intervals.}
\label{sec:mcmc}
Given the observed mixed spectrum $y$ and reference spectra $\{x_{ij}\}$, we draw approximate samples from the joint posterior distribution $p(b, M, \Sigma, \tau_y, \{\tau_j\} \mid y, \{x_{ij}\})$ using a Markov chain Monte Carlo (MCMC) algorithm. Each iteration produces one sample of all parameters. After running the chain for $T$ iterations and discarding the first $T_0$ as burn-in, point estimates are obtained by averaging the remaining samples, and a $(1-\alpha)$ confidence interval for the $j$-th abundance is given by the empirical $\alpha/2$ and $1-\alpha/2$ quantiles of the posterior samples for $b_j$. The reconstructed spectrum is estimated as the posterior mean of $Mb$.

\begin{algorithm}
\text{MCMC Sampler for the Hierarchical Multiplicative EIV Model}\\\\
\textbf{Input:} Mixed spectrum $y$, reference spectra $\{x_{ij}\}$, hyperparameters $\Theta$, number of iterations $T$. \\
\textbf{Initialize:} $b^{(0)}, M^{(0)}, \Sigma^{(0)}, \tau_y^{(0)}, \{\tau_j^{(0)}\}$ sampled from their respective prior distributions. \\
\textbf{For} $t = 1, \dots, T$, repeat:

\textbf{Metropolis-Hastings Updates:}
\begin{enumerate}
    \item \textbf{Update $b$.} Define $D(b, M) = \text{diag}(Mb)^{-1}$ and the curvature-adapted proposal covariance
    \begin{equation*}
        H(b, M, \tau_y) = \left( M^T D(b, M)^T (\tau_y \Sigma)^{-1} D(b, M) M \right)^{-1}.
    \end{equation*}
    Propose $b^*$ from a multivariate normal distribution centered at $b^{(t-1)}$ with covariance $H(b^{(t-1)}, M^{(t-1)}, \tau_y^{(t-1)})$, truncated to the positive orthant $\mathbb{R}^p_+$ to enforce non-negativity of the abundances. We write this as $b^* \sim \mathcal{TN}_+\left(b^{(t-1)}, H(b^{(t-1)}, M^{(t-1)}, \tau_y^{(t-1)})\right)$. Sampling from this distribution and evaluating its density $q_+$ are performed using the minimax tilting algorithm of \cite{Botev2017}. Since $H$ depends on $b$, the proposal is asymmetric and both directions must be evaluated, set
    \begin{equation*}
        b^{(t)} = \begin{cases} b^* & \text{with probability } \min(1, r), \\ b^{(t-1)} & \text{otherwise,} \end{cases}
    \end{equation*}
    where
    \begin{equation*}
        \log r = \log p(b^* \mid \cdot) - \log p(b^{(t-1)} \mid \cdot) + \log q_+(b^{(t-1)} \mid b^*) - \log q_+(b^* \mid b^{(t-1)}),
    \end{equation*}
    with $p(b \mid \cdot) \propto \exp\left( -\frac{1}{2\tau_y^{(t-1)}}(\log y - \log M^{(t-1)}b)^T (\Sigma^{(t-1)})^{-1} (\log y - \log M^{(t-1)}b) \right) \cdot \exp\left( -p \sum_j b_j \right)$.

    \item \textbf{Update $M$.} For each $j = 1, \dots, p$, define the proposal covariance
    \begin{equation*}
        \Lambda_j = \left( \frac{n_j}{\tau_j^{(t-1)}} (\Sigma^{(t-1)})^{-1} + \frac{1}{\sigma^2} I_n \right)^{-1}.
    \end{equation*}
    Propose $\log (\mu_j^*) \sim \mathcal{N}\left(\log \mu_j^{(t-1)}, \Lambda_j / 5\right)$. The proposal is a symmetric random walk on the log scale, set
    \begin{equation*}
        \mu_j^{(t)} = \begin{cases} \mu_j^* & \text{with probability } \min(1, r), \\ \mu_j^{(t-1)} & \text{otherwise,} \end{cases}
    \end{equation*}
    where $\log r = \log p(\mu_j^* \mid \cdot) - \log p(\mu_j^{(t-1)} \mid \cdot)$ and $p(\mu_j \mid \cdot)$ combines the contributions of $y$ through $\log(M^{(t)}b^{(t)})$ and the reference pixels $\{x_{ij}\}$ for the $j$-th endmember class.
\end{enumerate}

\textbf{Full conditional updates:}
\begin{enumerate}
    \setcounter{enumi}{2}
    \item \textbf{Update $\Sigma$.} Sample from the closed-form full conditional:
    \begin{equation*}
        \Sigma^{(t)} \mid \cdot \sim \text{Inverse-Wishart}\left( \Psi + \frac{r_y r_y^T}{\tau_y^{(t-1)}} + \sum_{j=1}^p \frac{1}{\tau_j^{(t-1)}} \sum_{i=1}^{n_j} r_{ij} r_{ij}^T, \quad \rho + 1 + \sum_{j=1}^p n_j \right),
    \end{equation*}
    where $r_y = \log y - \log(M^{(t)}b^{(t)})$ and $r_{ij} = \log x_{ij} - \log \mu_j^{(t)}$.

    \item \textbf{Update $\tau_j$, $j = 1, \dots, p$.} Sample from the closed-form full conditional:
    \begin{equation*}
        \tau_j^{(t)} \mid \cdot \sim \text{Inverse-Gamma}\left( \frac{n \cdot n_j + \nu}{2}, \quad \frac{\nu + \sum_{i=1}^{n_j} r_{ij}^T (\Sigma^{(t)})^{-1} r_{ij}}{2} \right).
    \end{equation*}

    \item \textbf{Update $\tau_y$.} Sample from the closed-form full conditional:
    \begin{equation*}
        \tau_y^{(t)} \mid \cdot \sim \text{Inverse-Gamma}\left( \frac{n + \nu}{2}, \quad \frac{\nu + r_y^T (\Sigma^{(t)})^{-1} r_y}{2} \right).
    \end{equation*}
\end{enumerate}
\textbf{Output:} After discarding the first $T_0$ samples as burn-in, the remaining $T - T_0$ samples are used as follows. The posterior mean abundance vector is estimated as $\hat{b} = \frac{1}{T - T_0}\sum_{t=T_0+1}^T b^{(t)}$, and the reconstructed spectrum as $\hat{y} = \frac{1}{T-T_0}\sum_{t=T_0+1}^T M^{(t)} b^{(t)}$. A $(1-\alpha)$-confidence interval for the $j$-th abundance is given by the empirical $\alpha/2$ and $1-\alpha/2$ quantiles of $\{b_j^{(t)}\}_{t=T_0+1}^T$.
\end{algorithm}

\section{Experiments}
\label{sec:experiments}

We evaluate the proposed multiplicative model on two hyperspectral imaging datasets. For each dataset, we select two patches of size $30 \times 30$ from the image. The patches are chosen to contain material boundaries and significant contributions from all endmembers.

We compare the multiplicative model against an additive Gaussian model with an identical hierarchical structure, where the mixed spectrum and reference spectra are modeled as
\begin{align*}
    y &= Mb + \varepsilon, & \varepsilon \mid \tau_y, \Sigma &\sim \mathcal{N}(0, \tau_y\Sigma), \\
    x_{ij} &= \mu_j + \varepsilon_{ij}, & \varepsilon_{ij} \mid \tau_j, \Sigma &\sim \mathcal{N}(0, \tau_j\Sigma), \quad i = 1, \dots, n_j.
\end{align*}
Posterior inference for the additive model proceeds via a similar MCMC algorithm, with the log-scale likelihood replaced by its Gaussian counterpart.

Our metrics of comparison are the coverage rates of the 90\% quantile intervals, defined as the 5th and 95th percentile of the posterior samples, the width of those intervals as a function of the abundance, and the signal reconstruction error (SRE). The SRE for a pixel is defined as
\begin{equation*}
    \text{SRE} = 10 \log_{10}\left(\frac{\|y\|_2^2}{\|y - \hat{y}\|_2^2}\right),
\end{equation*}
where $y$ is the observed pixel reflectance and $\hat{y} = \E[Mb|y, \{x_{ij}\}]$, the posterior mean of $Mb$.

For both datasets, abundance maps are available as part of the ground truth. We construct 90\% quantile intervals from the MCMC output for each pixel and endmember. Since posterior samples are strictly positive by construction, the interval lower bound cannot reach zero exactly. For both models, any lower bound below $0.01$ is treated as zero when assessing coverage, reflecting the fact that the posterior is supported on $\mathbb{R}^p_+$ and the 5th percentile of a finite MCMC chain cannot reach zero exactly.

To assess how coverage and interval width vary as a function of the signal strength, we bin the remaining pixels for each endmember into 10 quantile-based bins according to their true abundance values. Each bin contains approximately the same number of pixels, avoiding the sparsely populated bins that would arise from equal-width binning on the skewed abundance distribution.

\subsection{Experiment Results on the Jasper Ridge Dataset}
\label{sec:jasper}

The Jasper Ridge dataset was captured by the Airborne Visible/Infrared Imaging Spectrometer (AVIRIS) sensor \cite{GREEN1998227} operated by the Jet Propulsion Laboratory (JPL) over the Jasper Ridge Biological Preserve in California. The full scene consists of $512 \times 614$ pixels recorded across 224 spectral bands spanning 380--2500 nm at a spectral resolution of 9.46 nm. Following standard preprocessing, a $100 \times 100$ pixel subimage is extracted starting from pixel $(105, 269)$ in the original scene, and bands 1--3, 108--112, 154--166, and 220--224 are removed due to water vapor absorption and atmospheric effects, leaving $n = 198$ channels. The scene contains $p = 4$ endmembers: Road, Soil, Water, and Tree, with ground truth abundance maps provided by \cite{zhu2014spectral}.

\begin{figure}[!htbp]
    \centering
    \includegraphics[width=\columnwidth]{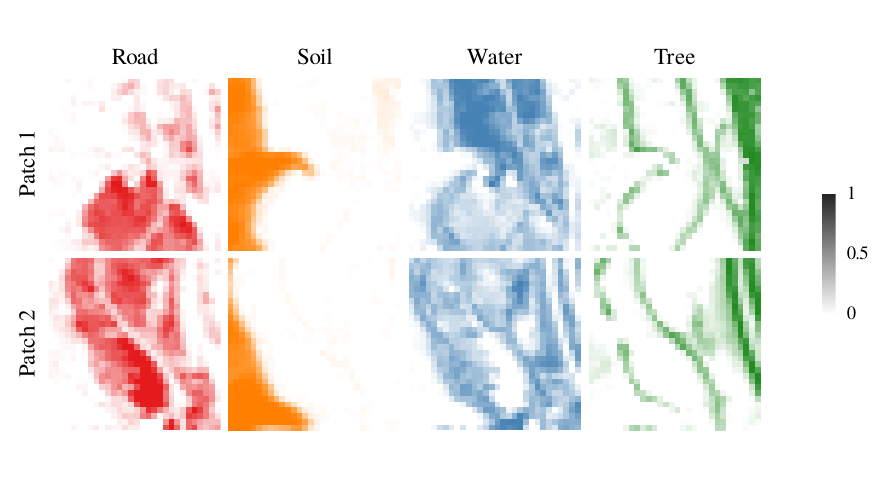}
    \caption{Ground truth abundance maps for both $30 \times 30$ patches extracted from the Jasper Ridge dataset, for each of the four endmember classes: Road, Soil, Water, and Tree.}
    \label{fig:abundance_jasper}
\end{figure}

We extract two $30 \times 30$ patches from the image and identify reference pure pixels by thresholding the ground truth abundances at $0.99$. See Figure~\ref{fig:abundance_jasper} for the ground truth abundance maps of both patches. Figures~\ref{fig:coverage_jasper} and~\ref{fig:sre_jasper} summarize the results. The left column of Figure~\ref{fig:coverage_jasper} shows the conditional coverage rates of the confidence intervals obtained under the multiplicative and additive models for each endmember across the 10 abundance bins. The multiplicative model achieves higher coverage than the additive model in almost every bin, which is the relevant criterion since coverage should hold conditionally at each abundance level rather than on average. The right column shows the mean 90\% quantile interval width as a function of true abundance. While the additive model produces approximately constant interval widths regardless of abundance value, the multiplicative model interval widths scale with the abundance value, consistent with the constant MCV property established in Lemma~\ref{lemma1}.

\begin{figure}[!htbp]
    \centering
    \includegraphics[width=\columnwidth]{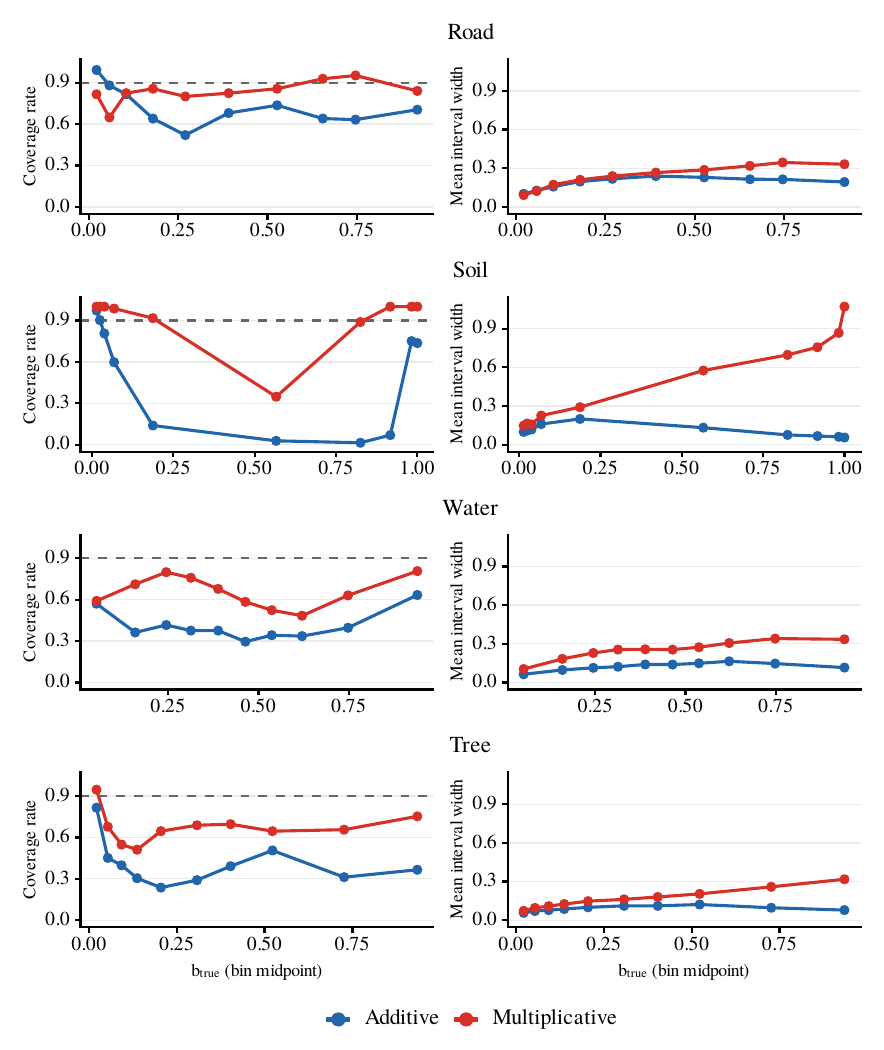}
    \caption{Conditional coverage rates (left) and mean 90\% quantile interval widths (right) as a function of the true abundance value for each endmember on the Jasper Ridge dataset, combined over 1800 pixels from both patches. The dashed line indicates nominal 90\% coverage. Blue corresponds to the additive model and red to the multiplicative model.}
    \label{fig:coverage_jasper}
\end{figure}

Figure~\ref{fig:sre_jasper} plots SRE$_{\text{mult}}$ $-$ SRE$_{\text{add}}$ for each pixel, where red indicates pixels where the multiplicative model reconstructs the signal better and blue the reverse. The difference map is broadly balanced, with neither model consistently dominating. The primary advantage of the multiplicative model lies in its better-calibrated uncertainty quantification, as evidenced by the coverage results.

\begin{figure}[!htbp]
    \centering
    \includegraphics[width=\columnwidth]{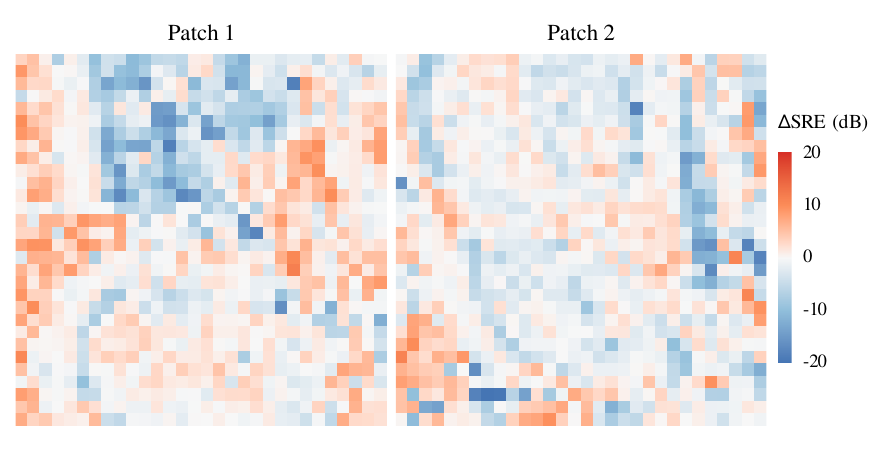}
    \caption{Pixel-wise SRE difference SRE$_{\text{mult}} -$ SRE$_{\text{add}}$ (dB) for both patches on the Jasper Ridge dataset. Red indicates pixels where the multiplicative model achieves better signal reconstruction, blue indicates the reverse.}
    \label{fig:sre_jasper}
\end{figure}

\subsection{Experiment Results on the Urban Dataset}
\label{sec:urban}

The Urban dataset \cite{zhu2014spectral} was acquired by the HYDICE sensor over an urban area at Copperas Cove, TX, USA in October 1995. The scene consists of $307 \times 307$ pixels at a spatial resolution of $2\times 2$ m$^2$, with 210 spectral bands spanning 400--2500 nm at 10 nm spectral resolution. Following standard preprocessing, bands 1--4, 76, 87, 101--111, 136--153, and 198--210 are removed due to water vapor absorption and atmospheric effects, and the remaining bands are further truncated to the first 80, as the reflectance values of certain endmember classes drop to near zero at higher band indices, which we attribute to sensor artifacts rather than physical signal, leaving $n = 80$ channels.

We adopt $p = 4$ endmember classes derived from the 6-endmember ground truth provided with the dataset \cite{zhu2014spectral}. Specifically, we merge Asphalt and Metal into a single class due to their high spectral similarity, and likewise merge Grass and Tree. The spectral angle mapper (SAM) distance between Asphalt and Metal is $6.97^\circ$ with a Pearson correlation of $0.853$, and between Grass and Tree is $14.66^\circ$ with a Pearson correlation of $0.955$, justifying the merging of these pairs. The resulting four classes are: Asphalt (Asphalt $+$ Metal), Grass/Tree, Roof and Dirt. The ground truth abundance for each merged class is obtained by summing the constituent abundances.

We extract two $30 \times 30$ patches and identify reference pure pixels by thresholding the ground truth abundances at $0.99$. See Figure~\ref{fig:abundance_urban} for the ground truth abundance maps of both patches.

\begin{figure}[!htbp]
    \centering
    \includegraphics[width=\columnwidth]{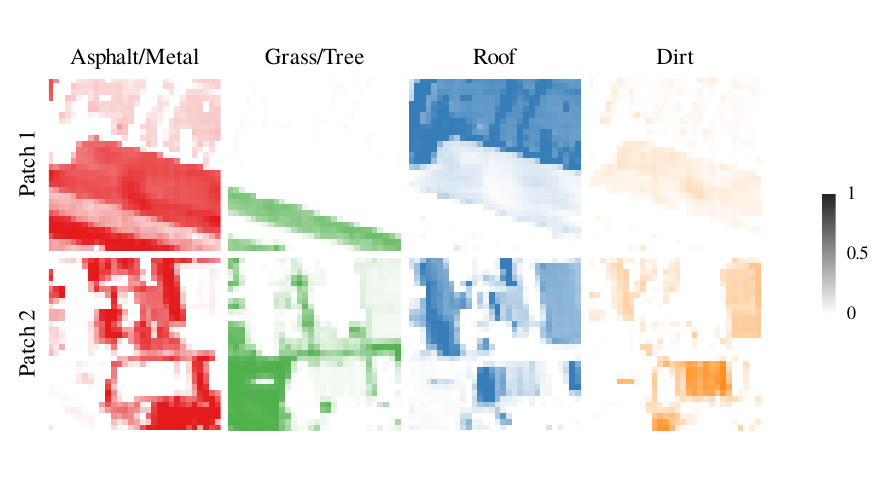}
    \caption{Ground truth abundance maps for both $30 \times 30$ patches extracted from the Urban dataset, for each of the four endmember classes: Asphalt/Metal, Grass/Tree, Roof, and Dirt.}
    \label{fig:abundance_urban}
\end{figure}

The left column of Figure~\ref{fig:coverage_urban} compares the conditional coverage rates for each endmember across the 10 abundance bins. Consistent with the results of Section~\ref{sec:jasper}, the multiplicative model achieves higher coverage than the additive model in almost every bin. The right column shows the interval widths as a function of true abundance, again consistent with the constant MCV property of Lemma~\ref{lemma1}.

\begin{figure}[!htbp]
    \centering
    \includegraphics[width=1\columnwidth]{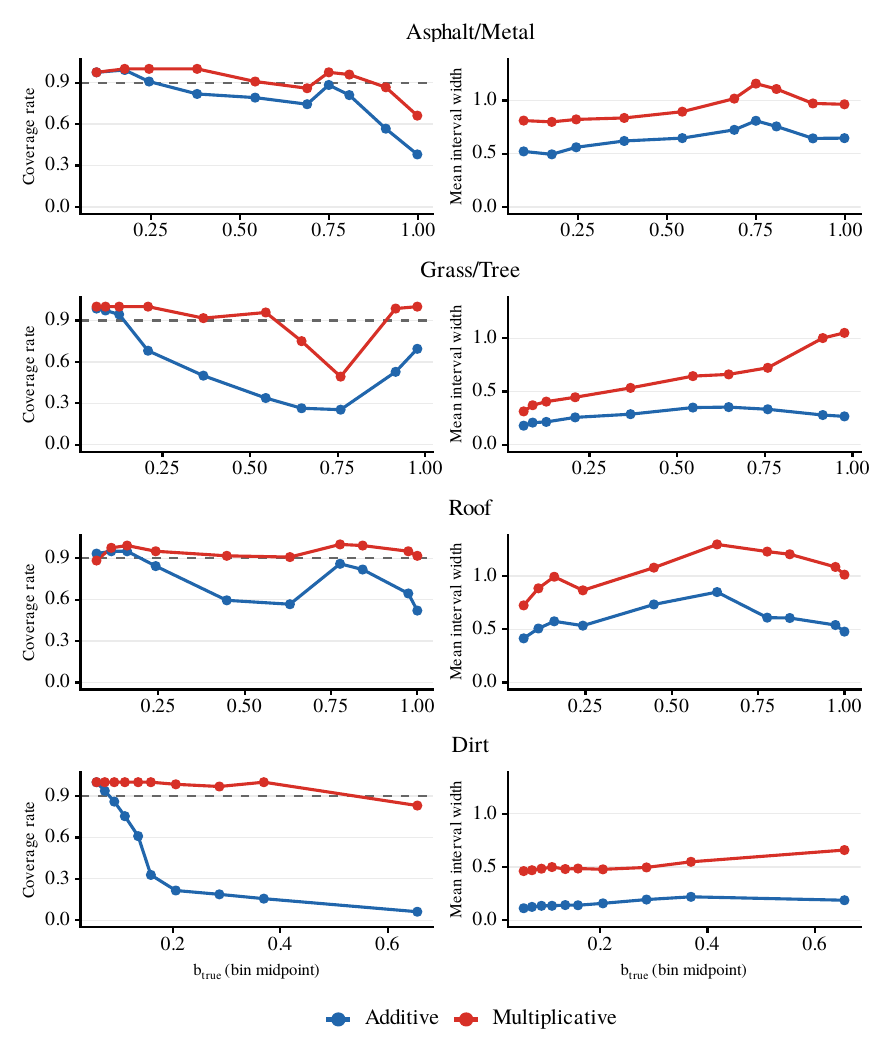}
    \caption{Conditional coverage rates (left) and mean 90\% quantile interval widths (right) as a function of the true abundance value for each endmember on the Urban dataset, combined over 1800 pixels from both patches. The dashed line indicates nominal 90\% coverage. Blue corresponds to the additive model and red to the multiplicative model.}
    \label{fig:coverage_urban}
\end{figure}

Figure~\ref{fig:sre_urban} plots SRE$_{\text{mult}}$ $-$ SRE$_{\text{add}}$ for each pixel, where red indicates pixels for which the multiplicative model reconstructs the signal better and blue is the reverse. The map is predominantly red, suggesting that the multiplicative model is better suited for signal reconstruction on this dataset.

\begin{figure}[!htbp]
    \centering
    \includegraphics[width=\columnwidth]{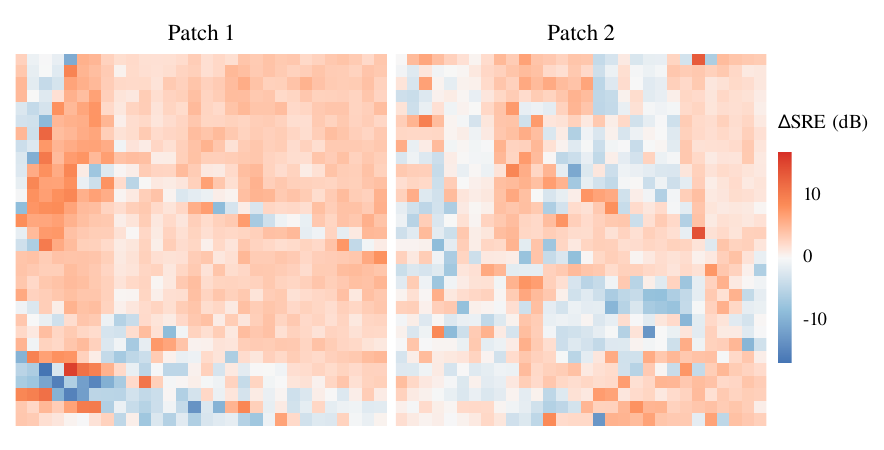}
    \caption{Pixel-wise SRE difference SRE$_{\text{mult}} -$ SRE$_{\text{add}}$ (dB) for both patches on the Urban dataset. Red indicates pixels where the multiplicative model achieves better signal reconstruction, blue indicates the reverse. The predominantly red map suggests the multiplicative model is a better generative model for this scene.}
    \label{fig:sre_urban}
\end{figure}

\section{Discussion}
\label{sec:discussion}

We developed a hierarchical Bayesian multiplicative Errors-in-Variables model for hyperspectral unmixing that accounts for signal-dependent noise, intra-class spectral variability, and inter-band dependence. We establish that the multiplicative model yields abundance confidence intervals that scale proportionally with the abundance value. The experiments on the Urban and Jasper Ridge datasets validate this: the multiplicative model produces better-calibrated coverage rates across abundance levels, with the miscalibration of the additive model being most pronounced at high abundance values. Signal reconstruction quality is comparable between the two models on Jasper Ridge, while the multiplicative model achieves better SRE on the Urban dataset, suggesting it is also a more faithful generative model for scenes dominated by multiplicative noise sources such as topographic shading and atmospheric path radiance.

Several directions remain open for future work. The current model is not well-suited for handling near-zero abundances. When the true abundance of a material is absent or negligible in a pixel, a zero-inflated formulation for $b$ would provide a more principled treatment. The shared covariance matrix $\Sigma$ could be modeled more parsimoniously using a factor model structure, which would reduce the number of parameters and improve scalability to images with a large number of spectral bands. Additionally, while the current model treats the noise as purely multiplicative, real hyperspectral scenes may exhibit a combination of additive and multiplicative perturbations. Incorporating an additive noise component alongside the multiplicative structure would yield a more general model. The current formulation also operates on individual pixels independently. Extending inference to the full image by incorporating spatial structure across pixels represents a natural next step.

While developing our methodology, we also explored a more general formulation in which the endmember contributing to a mixed pixel is treated as a particular realization from the distribution of that class rather than the class median. For example, the tree species present in a given pixel is a random sample from the distribution of trees in the scene, and is not directly observed. This motivates a model where the pixel is a mixture of latent endmember samples $X^* = [x_1^*, \dots, x_p^*]$, where each $x_j^*$ is an unobserved realization from the $j$-th endmember distribution, rather than a mixture of the class medians $\mu_j$:
\begin{align*}
    y &= (X^*b) \odot \varepsilon, & \log(\varepsilon) \mid \tau_y, \Sigma &\sim 
    \mathcal{N}(0, \tau_y\Sigma), \\
    x_{j}^* &= \mu_j \odot \varepsilon_j, & \log(\varepsilon_j) \mid \tau_j, \Sigma 
    &\sim \mathcal{N}(0, \tau_j\Sigma), \\
    x_{ij} &= \mu_j \odot \varepsilon_{ij}, & \log(\varepsilon_{ij}) \mid \tau_j, 
    \Sigma &\sim \mathcal{N}(0, \tau_j\Sigma), \quad i = 1, \dots, n_j.
\end{align*}
While this model is more realistic, it introduces additional latent variables that must be marginalized over during inference. In numerical experiments, we did not observe significant performance advantages in coverage or signal reconstruction over the simpler formulation presented here, and therefore have adopted the latter for our main analysis.

\bibliographystyle{plainnat}
\bibliography{references_main}
\appendix

\section{Verification of the Conditions of \cite{Wu1981}}
\label{app:wu}
 
We verify the conditions of \cite{Wu1981} for the nonlinear least squares problem \eqref{eq:nls}, with regression function $f_i(b) = \log(m_i^T b)$, gradient $f_i'(b) = w_i(b)$, and Hessian $f_i''(b) = -w_i(b) w_i(b)^T$, using normalizing sequence $\tau_n = n$.
 
For consistency, Assumption~A of \cite{Wu1981} requires the identifiability condition $\sum_{i=1}^n (f_i(b) - f_i(b_0))^2 \to \infty$ for every $b \ne b_0$, together with a growth-rate and a Lipschitz condition. Let $c = \inf_{i,\, b\in\Theta} m_i^T b$ and $C_+ = \sup_{i,\, b\in\Theta} m_i^T b$, which are finite and positive by (A1)--(A2). By the mean value theorem applied to $\log$ between $m_i^T b$ and $m_i^T b_0$, there is $\xi_i \in [c, C_+]$ with $f_i(b) - f_i(b_0) = \xi_i^{-1} m_i^T(b - b_0)$, so $(f_i(b) - f_i(b_0))^2 \ge C_+^{-2} (m_i^T(b - b_0))^2$ and hence
\begin{equation*}
    \frac1n \sum_{i=1}^n (f_i(b) - f_i(b_0))^2 \ge \frac{1}{C_+^2}\, (b - b_0)^T \Bigl(\frac1n \sum_{i=1}^n m_i m_i^T\Bigr)(b - b_0).
\end{equation*}
Since $(m_i^T b_0)^2 \ge c^2$, we have $\frac1n \sum_i m_i m_i^T \succeq c^2 \cdot \frac1n \sum_i \frac{m_i m_i^T}{(m_i^T b_0)^2} \to c^2 \bar I(b_0)$, so by monotonicity of the minimum eigenvalue $\liminf_n \lambda_{\min}\bigl(\frac1n \sum_i m_i m_i^T\bigr) \ge c^2 \lambda_{\min}(\bar I(b_0)) > 0$. With $\kappa_n := C_+^{-2} \lambda_{\min}\bigl(\frac1n \sum_i m_i m_i^T\bigr)$,
\begin{equation*}
    \frac1n \sum_{i=1}^n (f_i(b) - f_i(b_0))^2 \ge \kappa_n \|b - b_0\|^2, \qquad \liminf_n \kappa_n \ge \frac{c^2}{C_+^2} \lambda_{\min}(\bar I(b_0)) > 0,
\end{equation*}
which gives both the divergence and the growth-rate condition. The Lipschitz condition holds because (A2) bounds $\|f_i'(b)\| = \|w_i(b)\| \le W$ uniformly. Hence \cite[Theorem 3]{Wu1981} yields strong consistency.
 
For asymptotic normality, Assumption~B of \cite{Wu1981} is met as follows. Condition B(i) is (A3), namely $\frac1n \sum_i f_i'(b_0) f_i'(b_0)^T = \frac1n \sum_i w_i w_i^T \to \bar I(b_0)$ positive definite, with $b_0$ interior by (A1). Condition B(ii) holds since $\|w_i\| \le W$ makes each term $O(n^{-1})$. Condition B(iii) holds because $w_i(b) w_i(b)^T$ is Lipschitz in $b$ uniformly in $i$, giving $\frac1n \sum_i w_i(b) w_i(b)^T \to \bar I(b_0)$ as $b \to b_0$. Conditions B(iv)--B(v) hold because $\|f_i''(b)\| \le W^2$ uniformly and $f_i''$ is likewise uniformly Lipschitz. Since the errors are Gaussian, the least squares estimator is the MLE, and \cite[Theorem 5]{Wu1981} gives the stated limit with variance $\sigma^2 \bar I(b_0)^{-1}$.

\end{document}